\newcommand{\N}{I\!\!N}
\newcommand{\R}{I\!\!R}
\newtheorem{definition}{\bf Definition}
\newtheorem{proposition}{\bf Proposition}
\newtheorem{lemma}{\bf Lemma}
\newcommand{\Prob}{I\!\! P}
\newcommand{\E}{I\!\! E}
\begin{document}
%
\title{\LARGE  Evolutionary Poisson Games for Controlling Large Population Behaviors
\vspace{-4mm}}

\author{Yezekael Hayel$^{1,2}$ and Quanyan Zhu$^{2}$\\{~}
{\small $^{1}$LIA/CERI, University of Avignon, Avignon, France, Email: yezekael.hayel@univ-avignon.fr}\\{~} 
{\small $^{2}$Department of Electrical and Computer Engineering, New York University, USA. E-mail: \{yezekael.hayel,quanyan.zhu\}@nyu.edu.}\vspace{-4mm}}


\maketitle

\begin{abstract}
Emerging applications in engineering such as crowd-sourcing and (mis)information propagation involve a large population of heterogeneous users or agents in a complex network who strategically make dynamic decisions. In this work, we establish an evolutionary Poisson game framework to capture the random, dynamic and heterogeneous interactions of agents in a holistic fashion, and design mechanisms to control their behaviors to achieve a system-wide objective. We use the antivirus protection challenge in cyber security to motivate the framework, where each user in the network can choose whether or not to adopt the software. We introduce the notion of evolutionary Poisson stable equilibrium for the game, and show its existence and uniqueness.  Online algorithms are developed using the techniques of stochastic approximation coupled with the population dynamics, and they are shown to converge to the optimal solution of the controller problem. Numerical examples are used to illustrate and corroborate our results.

\end{abstract}


%
\IEEEpeerreviewmaketitle

\section{Introduction}

Emerging engineering applications such as social networks \cite{jackson2008social}, crowdsourcing \cite{kittur2008crowdsourcing,howe2006rise} and the Internet of Things (IoTs) \cite{atzori2010internet} involve a large population of heterogeneous devices or users. These agents interact with each other in a complex environment, in which each agent makes strategic and dynamic decisions in response to the group of agents it interacts with. The group of agents can be random and changing over time. One illustrative example is 5G wireless communication networks \cite{ohmori2000future}. Each mobile can communicate with a number of heterogeneous devices at different times, and makes an investment decision on antivirus software. 
This situation is also analogous to the epidemic spread of influenza in which individual person makes a decision on vaccination. The objective from the perspective of the system designer or government agency is to control the large population behaviors, and induce desirable outcome that is conducive for the sustainable growth of the population. In order to address this issue, the first step of this research is to establish an integrated system framework that allows capturing the random, dynamic and heterogeneous features of the population. 

One useful tool to describe the dynamic evolution of population is  evolutionary game theory \cite{HofBook,Hofbauer03}, which often assumes homogeneous and pairwise interactions between agents. This underlying assumption makes the classical framework insufficient to capture the network properties of the agents, and the heterogeneous local interactions among the players. 
In this paper, we develop an evolutionary Poisson game framework which bridges the gap between the evolutionary game theory with the heterogeneity of the population. We enrich the game-theoretic model by incorporating network topology, the size of the population, and the epidemic process to establish a holistic framework that can be used to address the engineering applications of interest. These unique aspects of the model lead to a customized  evolutionary stability equilibrium concept, and its corresponding replicator dynamics for describing the evolution of the population. 

\begin{figure}[htbp]
\centering
\vspace{-3mm}\includegraphics[width=5in]{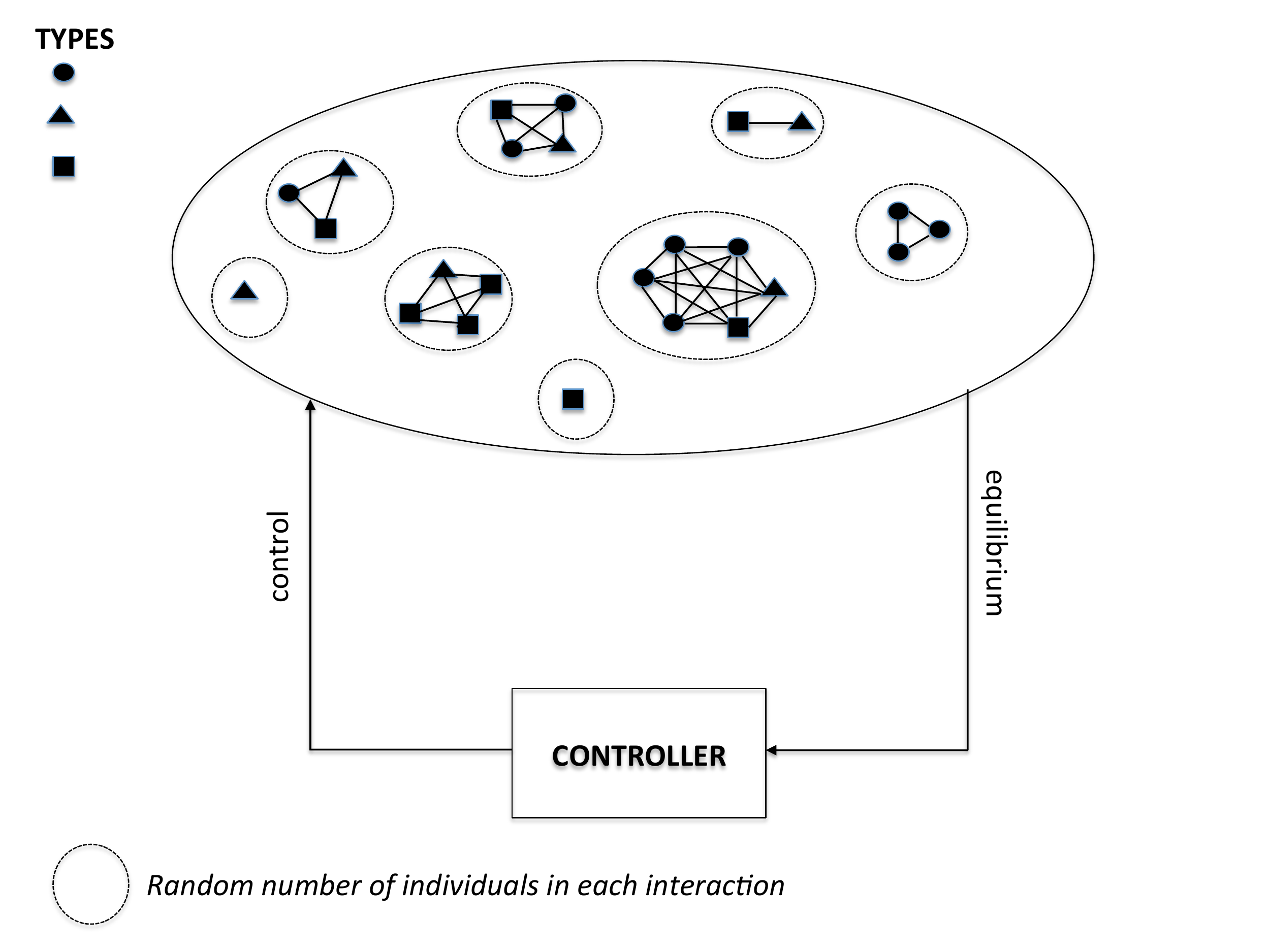}\vspace{-5mm}
\caption{Feedback system between controller and the evolutionary Poisson game model. Each circle represents a group of agents who interact through a network. Each agent is denoted by its type. The controller observes the population states and controls the population to achieve a global objective. }
\label{fig_ex1}\vspace{-2mm}
\end{figure}

The overarching goal of this work is to control the behaviors of the large population to achieve a system-wide objective. Building on the evolutionary Poisson game framework, we leverage the techniques of stochastic approximation to develop an online learning algorithm. Fig. \ref{fig_ex1} illustrates the interdependencies between the game-theoretic model and the controller. The controller observes the population states and inputs a control to drive the system to a global optimum.

The convergence analysis of the controller involves the understanding of the coupled dynamics between the population and the learning algorithm. We show that the convergence is guaranteed under time scale separation and mild conditions on the step sizes. In addition, we use virus protection over a large-scale network as a motivating application to illustrate the link between the game-theoretic model and the application. We fully characterize the global control of the virus protection problem in network systems, and corroborate the results with numerical examples. We observe a phenomenon of heterogeneity induced confidence in which the protection rate decreases as heterogeneity of the population exceeds a certain threshold. 

\subsection{Related Work}
Large population behaviors have been investigated using models from evolutionary games \cite{Hof79,HofBook,Hofbauer03,Tembine08}, Poisson games \cite{Myerson00,Myerson98}, mean-field games \cite{lasry2007mean,MeanField,CDCmean}. These models have successfully captured different aspects of the large population. The evolutionary Poisson game developed in this work integrates the features of evolutionary games and Poisson games to form a more powerful framework to analyze and control systems with a large population of agents. 

From the perspective of applications, the optimal protection problem and epidemics spreading over networks has been recently investigated in \cite{Jiang11,Saha14,Theo13,VM09}. The existing models are not sufficient yet to incorporate the topological information into a holistic epidemic and game model simultaneously. 
 In this work, we aim to address this issue by proposing an integrated framework that can be used to broaden the scope of the applications and capture more pertinent features of the problem. 


\subsection{Organization of the Paper}
The paper is organized as follows: In section \ref{section:model}, we describe the system model, and we develop in section \ref{section:ESS} the evolutionary stable equilibrium concept and its associated replicator dynamics. In Section \ref{section:control}, we present a global control problem, and we provide in section \ref{section:comp} explicit results that completely characterize a class of virus protection problem. Finally, we conclude the paper in section \ref{section:conc} and discuses future work. Due to the page limit of the paper, we remove the proofs of the results. The details of the proofs can be found in \cite{HayelZhu}.


\section{System Model}\label{section:model}
In this section, we introduce 
the large population game model, and discuss an application of epidemic protection in an heterogeneous population. 
\subsection{Random player game}
Large scale interacting systems often involve a random number of interacting players.
 Poisson game  is a natural framework to capture this phenomenon \cite{Myerson98,Myerson00}, which has been successfully used to study decentralized resource allocation in networks \cite{Rodriguez13,simhon14}. A Poisson Game $\Gamma$ is mathematically defined by a five-tuple $(\lambda,\tau,r,\mathcal{C},u)$ where:
\begin{itemize}
\item $\lambda$ corresponds to the mean number of players, typically $\lambda>>1$,
\item $T$ is the set of types of players and each one belongs to one type $t \in T$,
\item The probability of a player being of type $t$ is given by $r(t)$, and the number of players of type $t$ is a Poisson random variable with parameter $\lambda r(t)$,
\item $\mathcal{C}$ is the set of all pure actions available to all players,
\item The utility of a player of type $t$ is $u_t(a,x)$ with $a$ is the pure action, and $x$ is a vector of size $|\mathcal{C}|$ where $x(b)$ is the number of players who choose action $b$ in $\mathcal{C}$. 
\end{itemize}  
The expected utility of a player of type $t$ who plays action $a$ while the rest of the players are expected to play using strategy $\sigma$ is:
$$
U_t(a,\sigma)=\sum_{x \in Z(\mathcal{C})}P(x|\sigma)u_t(a,x),
$$
where $Z(\mathcal{C})$ denotes the set of elements $w \in \R^{\mathcal{C}}$ such that $w(c)$ is a non-negative integer for all $c \in \mathcal{C}$.
The decomposition property of the Poisson distribution yield:
$$
P(x|\sigma)=\prod_{b\in \mathcal{C}}e^{-\lambda \tau(b)}\frac{(\lambda \tau(b))^{x(b)}}{x(b)!},
$$
and
$$
\tau(b)=\sum_{t \in T}r(t)\sigma_t(b).
$$
If players play according to the strategy $\sigma$, $\sigma_t(c)$ is the probability that a player of type $t$ chooses the pure action $c$. Finally, the expected utility whether a player chooses action $\theta \in \Delta(\mathcal{C})$ is:
$$
U_{t}(\theta,\sigma)=\sum_{a \in \mathcal{C}}\theta(a)U_t(a,\sigma).
$$

\begin{definition}
The strategy $\sigma^*$ is a pure Nash equilibrium if
$$
\forall t \in T, \quad \sigma^* \in B_t(\sigma^*),
$$
with
$$
B_t(\sigma)=\{b \in \mathcal{C}: b \in \arg\min_{a \in \mathcal{C}}U_t(a,\sigma) \}.
$$
\end{definition}
We can also extend this framework to the mixed-strategy Nash equilibrium by considering the set of mixed best responses $\Delta(B_t(\sigma))$.

\subsection{Application to epidemic protection in heterogeneous population}
Having defined the non-cooperative game in the context of heterogeneous  interacting randomly individuals, we describe one application related to virus protection. Many recent work works have ignored 
the topology of the interaction, the heterogeneity of the individuals or the selfishness of their decision in their models. In this work, we develop a holistic framework that can incorporate these features. We start by introducing the Susceptible-Infected-Susceptible (SIS) epidemic model, which has been well-studied in the literature \cite{BaileySIS} and recently has gained lots of interests for modeling computer viruses propagation \cite{Prakash10, Kephart91}.

Consider an SIS epidemics over a graph, there exists a limiting spreading factor rate, denoted by the critical epidemic threshold, below which the infection vanishes exponentially fast in time, and above which the critical threshold the network stays infected. In an in-homogeneous SIS epidemics, we can express the epidemic threshold of an individual effective spreading rate $\tau_i$ of each node $i$. Indeed, it is has been shown in \cite{PVM14} that by using a mean-field approximation of the Markov process model for the epidemic, for the complete graph structure with $N$ nodes, the critical threshold  thus satisfies the following relation:
$$
\sum_{i=1}^{N}\frac{1}{1+\tau_i^c}=N-1.
$$
The contamination process of our SIS epidemic is a Poisson variable with rate $\beta$ but our spreading framework is in-homogeneous as we consider that each node of type $t$ has a recovery process with rate $\delta_t$. 


Our framework is enough generic such that its can be applied to the control of large complex systems, such as  virus spreading \cite{Prakash10} and information cascading \cite{watts2002simple}. In order to illustrate our framework, we describe in the next section, our model for the controlled of an in-homogeneous SIS over a large population in which the interaction structure is stochastic.

We consider an individual protection strategic game where each player has a type, or private information which determines his recovery capability (e.g. the rate of recovery), and incomplete information (e.g. nodes are not aware of the number of players they interact with \footnote{Each local interaction has a random number of players which follows a Poisson process with an average population equal to $\lambda$($\lambda>>1$), i.e., most of the interactions that occur in the population  involve a large number of interacting users.}).

The set of pure actions of the players is $\mathcal{C}=\{OFF,ON\}$, and the set $Z(\mathcal{C})=\N^2$. Players are characterized by their recovery rate $\delta_t$ which depends on their type $t$. Then, the effective spreading rate for each type $t$ player is $\tau_t=\frac{\beta}{\delta_t}$. Based on the expression of the critical epidemic threshold for the heterogeneous SIS in a complete graph, we obtain a following necessary and sufficient condition over the effective spreading rates $\tau_t$ and the number of nodes $x_t$ of type $t$ that do not invest, in order for the infection to propagate in a complete graph:
\begin{eqnarray}
\sum_{t=1}^{T}\frac{x_t}{1+\tau_t}\leq \sum_{t}^{T}x_t-1.
\end{eqnarray}
This inequality is equivalent to the linear constraint:
\begin{eqnarray}
\sum_{t=1}^{T}\frac{x_t\tau_t}{1+\tau_t}\geq 1.
\end{eqnarray}
Depending on the decision of each player, if the infection is propagated over the entire network, then the cost for a player that does not protect itself is $K$, otherwise its cost is $0$. This cost may represent the restoring cost when a node is contaminated, or also this cost can be a penalty proposed by the system designer in order to control self-protection behavior. Then, the utility of a player of type $t$ is given by:
$$
u_t(OFF,(x_1,\ldots,x_T))=
 \left\{ \begin{array}{l l}
K& \mbox{if} \quad \sum_{t=1}^{T}\frac{x_t\tau_t}{1+\tau_t}\geq 1, \\
0 & \mbox{otherwise.}
\end{array}\right.
$$ 
 In our framework, the type-$t$ utility function is not defined over the set $Z(\mathcal{C})$ as in \cite{Myerson98} but as follows:
$$
u_t:\mathcal{C} \times \underbrace{Z(\mathcal{C}) \times \ldots \times Z(\mathcal{C}) }_{\times T} \rightarrow \R.
$$
In fact, the utility function in a Poisson game should depend on the total number of players choosing the same action over different types. In the model, we do not have such aggregative assumption in the utility, and the type as an impact on the utility function. For the same number of individuals that do not invest, i.e., take the action $OFF$, the utility of a player depends on the number of such individuals of each type. We then use the concept of random player game model proposed in \cite{Milchtaich04}, which is a generalization of Poisson games. In fact, games with a random number of players are a natural extension of Bayesian games, which are a class of incomplete information games \cite{Harsanyi}.

We denote by $X_t$ the random variable which determines the number of players of type $t$ that do not invest. Based on the decomposition property of the Poisson distribution, $X_t$ is a Poisson distribution with parameter $\lambda r(t)\sigma_t(OFF)$. Then, the total number of players that do not invest is a Poisson distribution with parameter $\lambda \sum_t r(t)\sigma_t(OFF)$.
If  a node decides to be protected, he pays a cost $C$, i.e.,
$$
u_t(ON,(x_1,\ldots,x_T))=C.
$$ 

Note that the utility functions do not depend on the type $t$ of the user \footnote{A similar analysis can be done for the case of type-dependent equilibrium (\cite{Fudenberg}).}. We consider a symmetric Nash Equilibrium. Denote by $p$ the probability that a player (of any type) chooses action $OFF$, i.e., for all types $t=1,\ldots,T$, $\sigma_t(OFF)=p$ and $\sigma_t(ON)=1-p$. The expected utility of a player who plays pure action $OFF$ while all other players are expected to play according to a mixed strategy $p$ depends on the realization vector $\textbf{x}=(x_1,x_2,\ldots,x_T)$ of the random vector $\textbf{X}=(X_1,\ldots,X_T)$ by:
$$
U_t(OFF,p)=\sum_{\textbf{x} \in \N^T}P(\textbf{X}=\textbf{x}|p)u_t(OFF,\textbf{x}):=U(OFF,p).
$$
Based on the decomposition and aggregation properties of the Poisson distribution, we arrive at
$$
P(\textbf{X}=\textbf{x}|p)=\prod_{t=1}^{T}P(X_t=x_t|p)=\prod_{t=1}^{T}\frac{(\lambda r(t) p)^{x_t}}{x_t!}e^{-\lambda r(t) p}.
$$
Then, the expected utility of a player that does not invest in protection, in face of a population profile $p$ is given by:
\small
\begin{eqnarray*}
U(OFF,p)&=&\sum_{\textbf{x} \in \N^T}P(\textbf{X}=\textbf{x}|p)u(OFF,\textbf{x})\\
&=&K \left(\sum_{\textbf{x}:\sum_{t=1}^{T}\frac{x_t\tau_t}{1+\tau_t}\geq 1}P(\textbf{X}=\textbf{x}|p)\right)\\
&=&K \left(1-\sum_{\textbf{x}:\sum_{t=1}^{T}\frac{x_t\tau_t}{1+\tau_t}< 1}P(\textbf{X}=\textbf{x}|p)\right)\\
&=&K \left(1-\sum_{\textbf{x}:\sum_{t=1}^{T}\frac{x_t\tau_t}{1+\tau_t}< 1}\prod_{t=1}^{T}P(X_t=x_t|p)\right)\\
&=&K \left(1-e^{-\lambda p}\sum_{\textbf{x}:\sum_{t=1}^{T}\frac{x_t\tau_t}{1+\tau_t}< 1}(\lambda p)^{\sum_{t=1}^{T}x_t}\prod_{t=1}^{T}\frac{r(t)^{x_t}}{x_t!}\right).
\end{eqnarray*}
\normalsize
Finally, the expected utility from playing action $q \in \Delta (\mathcal{C})$ is given by:
$$
U(q,p)=qU(OFF,p)+(1-q)C.
$$
Based on Definition 1, a (symmetric) mixed Nash Equilibrium $p^*$ for the protection game with a random number of player satisfies:
$$
\forall q \neq p^*, \quad U(p^*,p^*) \leq U(q, p^*).
$$
\begin{lemma}\label{lemma1}
If $C \geq K$, then the pure Nash equilibrium is $p^*=1$. 
\end{lemma}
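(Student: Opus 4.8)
The plan is to prove the stronger fact that $OFF$ is the strictly best response to \emph{every} population profile $p\in[0,1]$; since players minimize cost, this pins down $p^*=1$ and rules out any other equilibrium, pure or mixed. Two scalar payoffs are to be compared: a deviator who chooses $ON$ pays the constant $U(ON,p)=C$, while a deviator who chooses $OFF$ pays $U(OFF,p)=K\left(1-e^{-\lambda p}\sum_{\mathbf{x}:\,\sum_t x_t\tau_t/(1+\tau_t)<1}(\lambda p)^{\sum_t x_t}\prod_{t=1}^{T}\frac{r(t)^{x_t}}{x_t!}\right)$.

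The key step is to establish $U(OFF,p)<K$ for all $p\in[0,1]$, rather than the crude bound $U(OFF,p)\le K$ that one gets from reading the bracketed factor as a probability. For this I would single out the realization $\mathbf{x}=\mathbf{0}$: it satisfies $\sum_t x_t\tau_t/(1+\tau_t)=0<1$, it lies in the summation index set, and it contributes the term $1$ to the inner sum, which therefore is at least $1$. Hence $U(OFF,p)\le K\left(1-e^{-\lambda p}\right)$, and since $\lambda>0$ and $p\le 1$ give $e^{-\lambda p}>0$ (and $K>0$), we obtain $U(OFF,p)\le K(1-e^{-\lambda p})<K\le C=U(ON,p)$, the last inequality being the hypothesis $C\ge K$.

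It remains to conclude. For the pure equilibrium notion of Definition 1, the strict inequality $U(OFF,p)<U(ON,p)$ valid for all $p$ means the best-response correspondence satisfies $B_t(p)=\{OFF\}$ for every type $t$ and every $p$, so the fixed-point condition $\sigma^*\in B_t(\sigma^*)$ forces $\sigma_t^*(OFF)=1$ for all $t$, i.e. $p^*=1$. For completeness one can also note that, for fixed $p$, the map $q\mapsto U(q,p)=qU(OFF,p)+(1-q)C$ is affine in $q$ with negative slope $U(OFF,p)-C$, hence strictly decreasing, so its unique minimizer over $\Delta(\mathcal{C})$ is $q=1$; thus $p^*=1$ is the only profile satisfying $U(p^*,p^*)\le U(q,p^*)$ for all $q$. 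The only point requiring care is the uniform strict separation $U(OFF,p)<K$, which is exactly why one keeps the always-present empty-realization term instead of bounding the inner factor by $1$; everything else is a one-line comparison of two scalars.
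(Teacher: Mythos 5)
Your proposal is correct and follows essentially the same route as the paper: show that $U(OFF,p)<K\le C=U(ON,p)$ for every population profile $p$, so that $OFF$ is a strictly dominant (cost-minimizing) action and hence $p^*=1$. The only difference is that you justify the strict inequality $U(OFF,p)<K$ explicitly, by isolating the always-present realization $\mathbf{x}=\mathbf{0}$ to get the bound $U(OFF,p)\le K(1-e^{-\lambda p})$, a step the paper asserts without detail.
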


\begin{proof}
The proof of this lemma follows the intuition that if the cost for being protected $C$ is higher than the cost of being infected $K$, then every user will take the risk to be infected. Indeed, if it is, the cost injured is less or equal to the cost if it was protected. Mathematically speaking, for any population profile $p$ the utility of an individual to be protected is $C$ and we have that:
$$
U(OFF,p)=K \left(\sum_{\textbf{x}:\sum_{t=1}^{T}\frac{x_t\tau_t}{1+\tau_t}\geq 1}P(\textbf{X}=\textbf{x}|p)\right)<K.
$$
Then if we have $C \geq K$, then:
$$
\forall p, \quad U(OFF,p)<U(ON,p).
$$
Thus the strategy $OFF$ is a dominant strategy and all individuals play this action at equilibrium.
\end{proof}

We next state the proposition that describes the mixed equilibrium. 

\begin{proposition}[Existence and Uniqueness]\label{exun}
If the parameters of the system $\lambda$, $C$, $K$ (with $C<K$), $T$, the type distribution $r(\cdot)$ and the effective spreading rates $\tau_t$ satisfy the following condition:
$$
\sum_{\textbf{x}}\prod_{t=1}^{T}\frac{(\lambda r(t))^{x_t}}{x_t!}<(1-\frac{C}{K})e^{\lambda}.
$$
Then, there exists one unique mixed Nash equilibrium. 
\end{proposition}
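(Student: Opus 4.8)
\emph{Proof proposal.}
The plan is to reduce the symmetric-equilibrium condition to a scalar fixed-point equation, then use the intermediate value theorem for existence and a monotonicity argument for uniqueness. Set
$$
g(p)\;:=\;e^{-\lambda p}\sum_{\mathbf{x}\,:\,\sum_{t=1}^{T}\frac{x_t\tau_t}{1+\tau_t}<1}\;\prod_{t=1}^{T}\frac{(\lambda r(t)p)^{x_t}}{x_t!},
$$
so that $g(p)=\Prob\big(\sum_{t}\frac{X_t\tau_t}{1+\tau_t}<1\,\big|\,p\big)$ is the probability that the epidemic does \emph{not} spread when the population plays the profile $p$; the index set is finite because every $\tau_t=\beta/\delta_t>0$. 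With this notation $U(OFF,p)=K\,(1-g(p))$ and $U(ON,p)=C$. The first step is to rule out the two pure profiles: the best response to $p=0$ is $q=1$, since $U(q,0)=(1-q)C$ and $C>0$; and under the hypothesis $g(1)<1-\frac{C}{K}$ one has $U(OFF,1)=K(1-g(1))>C=U(ON,1)$, so the best response to $p=1$ is $q=0$. Hence any symmetric equilibrium lies in $(0,1)$ and must make the player indifferent, i.e.
$$
g(p^*)\;=\;1-\frac{C}{K}\;=:\;\rho\in(0,1).
$$

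For existence I would check that $g$ is continuous on $[0,1]$ with $g(0)=1>\rho$ (only $\mathbf{x}=0$ survives at $p=0$) and
$$
g(1)\;=\;e^{-\lambda}\sum_{\mathbf{x}\,:\,\sum_{t}\frac{x_t\tau_t}{1+\tau_t}<1}\prod_{t=1}^{T}\frac{(\lambda r(t))^{x_t}}{x_t!}\;<\;\rho ,
$$
the last inequality being exactly the stated hypothesis $\sum_{\mathbf{x}}\prod_{t}\frac{(\lambda r(t))^{x_t}}{x_t!}<(1-\frac{C}{K})e^{\lambda}$ divided by $e^{\lambda}$. By the intermediate value theorem there is $p^*\in(0,1)$ with $g(p^*)=\rho$; at such $p^*$ the player is indifferent between $OFF$ and $ON$, so every $q$, in particular $q=p^*$, is a best response, and $p^*$ is a symmetric (mixed) Nash equilibrium.

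For uniqueness it suffices to show $g$ is strictly decreasing on $(0,1)$, since then $g(p)=\rho$ has at most one root, which together with the previous paragraph pins $p^*$ down. I would prove monotonicity probabilistically rather than by differentiating the series. Conditioning on the total number $N\sim\mathrm{Poisson}(\lambda p)$ of non-investors gives $g(p)=\E[q_N]$, where $q_n:=\Prob(Y_1+\cdots+Y_n<1)$ and the $Y_i$ are i.i.d.\ copies of $\tau_T/(1+\tau_T)$ with $T\sim r$; the sequence $(q_n)$ is non-increasing and equals $0$ for $n\ge\lceil 1/m\rceil$, with $m:=\min_t \tau_t/(1+\tau_t)>0$. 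Since $N$ is stochastically increasing in $p$ and $q_\cdot$ is non-increasing, $g$ is non-increasing; strictness follows by coupling the type-$t$ Poisson counts monotonically across $p$, noting that for $0<p_1<p_2<1$ the event ``no arrivals below $\lambda r(t)p_1$ for all $t$, but at least $\lceil 1/m\rceil$ arrivals in total in the increments'' has positive probability and turns a no-epidemic realization at $p_1$ into an epidemic at $p_2$. The main obstacle is precisely this strict-monotonicity step: the termwise derivative of $e^{-\lambda p}\sum_{\mathbf{x}}c_{\mathbf{x}}p^{|\mathbf{x}|}$ is not sign-definite when $\lambda\gg 1$, so the clean route is the stochastic-dominance argument above; the rest is a routine continuity-plus-IVT computation.
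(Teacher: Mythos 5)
Your proof is correct, but it takes a genuinely different route from the paper's, most notably on uniqueness. The paper works with the unnormalized indifference equation $F(p)=G(p)$, where $F(p)=\sum_{\mathbf{x}:\sum_t x_t\tau_t/(1+\tau_t)<1}(\lambda p)^{\sum_t x_t}\prod_t r(t)^{x_t}/x_t!$ and $G(p)=(1-\frac{C}{K})e^{\lambda p}$: it notes $F(0)=1>G(0)$, assumes $F(1)<G(1)$ (the stated hypothesis, read with the sum restricted to the no-spread set, exactly as you read it --- the unrestricted sum would equal $e^{\lambda}$ and make the condition vacuous), and then argues uniqueness from both functions being increasing and convex, ``polynomial versus exponential.'' You instead normalize to $g(p)=e^{-\lambda p}F(p)$, interpret it as the no-epidemic probability $\Prob\bigl(\sum_t X_t\tau_t/(1+\tau_t)<1\mid p\bigr)$, and prove it is \emph{strictly decreasing} via the Poisson thinning/conditioning representation $g(p)=\E[q_N]$, stochastic dominance in $p$, and a monotone coupling for strictness; existence then follows from $g(0)=1>1-\frac{C}{K}>g(1)$ and the intermediate value theorem. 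You also add a step the paper skips: ruling out the boundary profiles $p=0,1$ so that every symmetric equilibrium must satisfy the indifference condition. What each approach buys: the paper's argument is shorter but its uniqueness step is informal (two increasing convex functions with those boundary values are not automatically single-crossing, and $F$ need not even be strictly convex when the no-spread set only contains terms of degree at most one, e.g.\ the single-type case with $\tau>1$ where $F(p)=1+\lambda p$), whereas your monotonicity-of-$g$ argument gives uniqueness cleanly, with no convexity assumptions, and simultaneously yields the comparative statics (the equilibrium $p^*$ decreasing in $C$ and $\lambda$) that the paper asserts separately by ``geometric'' reasoning.
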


\begin{proof}
A mixed Nash equilibrium $p^*$ satisfies:
$$
\forall q \neq p^*, \quad U(p^*,p^*) \leq U(q, p^*),
$$
which is equivalent to:
$$
p^* \in \arg\max_{p} U(p,p^*)=\arg\max_{p} (pU(OFF,p^*)+(1-p)C).
$$
Then, we study the solution $\tilde{p}$ of the equation:
$$
K \left(1-e^{-\lambda p}\sum_{\textbf{x}:\sum_{t=1}^{T}\frac{x_t\tau_t}{1+\tau_t}< 1}(\lambda p)^{\sum_{t=1}^{T}x_t}\prod_{t=1}^{T}\frac{r(t)^{x_t}}{x_t!}\right)=C,
$$
which is equivalent to:
\begin{eqnarray}
F(p):=\sum_{\textbf{x}:\sum_{t=1}^{T}\frac{x_t\tau_t}{1+\tau_t}< 1}(\lambda p)^{\sum_{t=1}^{T}x_t}\prod_{t=1}^{T}\frac{r(t)^{x_t}}{x_t!}=(1-\frac{C}{K})e^{\lambda p}:=G(p).
\label{sol}
\end{eqnarray}
Note that both functions are continuous, strictly increasing over the interval $[0,1]$ and also we have:
$$
G(0)=(1-\frac{C}{K})<1=F(0).
$$
The function $G(.)$ is strictly convex over the interval $[0,1]$. Also the same for the function  $F(.)$ which is a finite sum of strictly convex functions and then it is a strictly convex function over the interval $[0,1]$. Then, if we assume that $F(1)<G(1)$ there exists almost one solution of equation (\ref{sol}) inside the interval $]0,1[$. More as the left-hand side function is a polynom and the right-hand side an exponential, this equation, if it has a solution over $[0,1]$, this solution is unique.
\end{proof}

We have proved the existence and uniqueness of the equilibrium depending on the parameters of the problem. Based on simple geometric argument, as function $F$ and $G$ are continuous over the interval $[0,1]$, we can show that the equilibrium $p^*$ is strictly decreasing in $C$ (i.e., more expensive is the protection, less individuals will adapt the strategy $OFF$.), and it also is strictly decreasing with $\lambda$ (i.e., the average number of individuals in each local interaction). Indeed, having more people in average at each interaction makes individuals  more vulnerable, and then the protection rate (i.e., proportion of individuals protected) becomes higher at the equilibrium.
A last important remark from the previous proposition is that the strategy $ON$ cannot be a dominant strategy for any values of the parameters of the model. In fact, $F(0)>G(0)$, and there is always a proportion of individuals that are not protected.

\section{Evolutionary stability and dynamics}\label{section:ESS}

Another equilibrium concept, which is more robust than the Nash equilibrium and well adapted to large population games is called {\it Evolutionary Stable Strategies} (ESS). It is based on evolutionary principles that have been originally defined in \cite{MaynSm-Price} in biology, and recently applied to engineering and systems \cite{Tembine08}. In this section, we will introduce the concept of ESS and its associated dynamics.

\subsection{Evolutionary stability concept} 

ESS is a strategy such that, if adopted by all the players, is robust against deviations of a (possibly small) fraction of the population. From a biological point of view, it can be seen as a generalization of Darwin's idea of survival of the fittest, while from a game theory perspective, it is a refinement of the Nash Equilibrium, which satisfies a stability property. 

It is important to note that a mixed strategy $p$ can be also interpreted as the set of distributions of pure strategies among the players \cite{HofBook}. In our setting, the mixed strategy $p$ can describe, assuming that each player plays a pure action in $\mathcal{C}$, the proportion of players that choose the action $OFF$. This is also called the strategy profile of the population. Having this equivalent point of view of the game in mind, an ESS, if adopted by the whole population, is \textit{resistant} against mutations of a small fraction of individuals in the population. Suppose that the whole population adopts a strategy $q$, and that a fraction $\epsilon$ of \textit{mutants} deviate to strategy $p$. Strategy $q$ is an ESS if $\forall p\neq q$, there exists some $\epsilon_p>0$ such that $\forall \epsilon\in (0,\epsilon_p)$:
\begin{equation}
 U(q,\epsilon p+(1-\epsilon)q)<U(p,\epsilon p+(1-\epsilon)q).
 \label{equadef}
\end{equation}
In other words, this strict inequality says that an ESS defeats any small mutations (relative to $\epsilon$) of the population profile. In that sense, the equilibrium concept of ESS is said to be more robust than the Nash Equilibrium, because it is robust against the deviation of a fraction of players, and not only one. 

Another approach to study the evolutionary stability of an equilibrium is to consider the dynamics of the strategies inside the population. When the utility function is bilinear, this strict inequality condition can be replaced by two simple conditions: Nash condition and stability conditions. This type of analysis which makes the game as a standard evolutionary game that has been proposed in \cite{Tembine08} to analyze the evolutionary stability with a random number of individuals at each interaction. But in our setting, the utility function is clearly not bilinear and then we cannot use the Nash and stability conditions instead of equation (\ref{equadef}).

\subsection{Dynamics}

Another way to describe how a population reaches a stable situation is through the replicator dynamics, which serves to highlight the role of selection from a dynamic perspective. It is formalized by a system of ordinary differential equations, and it establishes that the evolution of the size of the populations depends on the fitness they achieve during interactions. A strategy will sustain if its fitness is higher than the fitness averaged over all the strategies used in the whole population. The \textit{folk theorem of evolutionary games} allows to establish a strict connection between the stable points of the Replicator Dynamics and the Nash Equilibria \cite{Hof79}.

In our context, the RD equation can be formalized as follows:
\begin{eqnarray}
\dot{p}(t)&=&p(t)\left[U(p(t),p(t))-U(OFF,p(t))\right],\nonumber\\
&=&p(t)(1-p(t))[U(ON,p(t))-U(OFF,p(t))],\nonumber\\
&=&p(t)(1-p(t))[C-U(OFF,p(t))].
\label{repdyn}
\end{eqnarray}

We have the results from the folk theorem of the replicator dynamics in population games that an interior rest point of the dynamics is a Nash equilibrium of our game. Then, we have another method to describe the equilibrium, as a rest point of a dynamical system. This provides a very important insight for a global control of the system. We will see in the next section, that this approach gives us the possibility to determine a two time-scale process  to optimize a global performance of the system without computing explicitly the Nash equilibrium. Moreover, we prove in the next proposition that any rest point of our dynamics is not only a Nash equilibrium but also an ESS. In our setting, we have the following result.
\begin{proposition}
If our game has a unique mixed equilibrium $p^*$, then it is an ESS and it corresponds to the unique interior rest point of the ODE (\ref{repdyn}).
\end{proposition}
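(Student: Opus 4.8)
The plan is to reduce the whole statement to the behaviour of a single scalar function. Write $\phi(p):=U(OFF,p)-C=K\bigl(1-e^{-\lambda p}F(p)\bigr)-C$ on $[0,1]$, with $F$ as in (\ref{sol}). Since the payoff is affine in the deviator's own mixed action, $U(q,p')=qU(OFF,p')+(1-q)C$, one has for all $q,p'\in[0,1]$ the identity
\[
U(q,p')-U(p^*,p')=(q-p^*)\bigl(U(OFF,p')-C\bigr)=(q-p^*)\,\phi(p').
\]
Hence the rest points of (\ref{repdyn}) and the ESS inequality (\ref{equadef}) are both governed entirely by the sign of $\phi$. The first thing I would establish is that $\phi$ is continuous and strictly increasing on $[0,1]$ with $\phi(0)<0<\phi(1)$; this forces a unique zero $p^*\in(0,1)$ and the sign pattern $\phi<0$ on $[0,p^*)$, $\phi>0$ on $(p^*,1]$, and identifies this $p^*$ with the mixed equilibrium of the hypothesis (cf. Proposition~\ref{exun}).

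For the monotonicity I would use the probabilistic meaning of $F$: taking $X_t\sim\mathrm{Poisson}(\lambda r(t)p)$ independent across $t$,
\[
1-\frac{U(OFF,p)}{K}=e^{-\lambda p}F(p)=\sum_{\textbf{x}:\ \sum_t \frac{x_t\tau_t}{1+\tau_t}<1}\ \prod_{t=1}^{T}\frac{(\lambda r(t)p)^{x_t}}{x_t!}e^{-\lambda r(t)p}=P\Bigl(\sum_{t}\frac{X_t\tau_t}{1+\tau_t}<1\Bigr).
\]
The index set is a down-set (the coefficients $\tau_t/(1+\tau_t)$ are positive) and each $X_t$ is stochastically increasing in $p$, so a monotone coupling makes this probability non-increasing in $p$; it is strictly decreasing because for $p_1<p_2$ there is positive probability that all $X_t$ vanish at $p_1$ while some $X_t$ already exceeds $(1+\tau_t)/\tau_t$ at $p_2$ (here $r(t)>0$ for some $t$ and $\tau_t=\beta/\delta_t>0$). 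Thus $U(OFF,\cdot)$, and so $\phi$, is strictly increasing; $\phi(0)=-C<0$; and $\phi(1)=K\bigl(1-e^{-\lambda}F(1)\bigr)-C>0$ is exactly the standing hypothesis $F(1)<G(1)=(1-\frac{C}{K})e^{\lambda}$. (Alternatively the same sign pattern can be read off the strict-convexity single-crossing argument already used in the proof of Proposition~\ref{exun}.)

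Given this sign pattern the two conclusions come out quickly. Rewriting (\ref{repdyn}) as $\dot p=p(1-p)\bigl(C-U(OFF,p)\bigr)=-\,p(1-p)\,\phi(p)$, its equilibria in $[0,1]$ are $p=0$, $p=1$ and the zeros of $\phi$, i.e.\ $p=p^*$; only $p^*$ lies in $(0,1)$ (note $\phi(0),\phi(1)\neq0$), so it is the unique interior rest point, and it is asymptotically stable since $-\phi$ changes from $+$ to $-$ across $p^*$. For the ESS property, fix $p\neq p^*$ and $\epsilon\in(0,1)$ and put $p_\epsilon:=\epsilon p+(1-\epsilon)p^*=p^*+\epsilon(p-p^*)$, which lies in $(0,1)$ strictly between $p^*$ and $p$, hence strictly on the same side of $p^*$ as $p$; then
\[
U(p,p_\epsilon)-U(p^*,p_\epsilon)=(p-p^*)\,\phi(p_\epsilon),
\]
and $(p-p^*)$ and $\phi(p_\epsilon)$ share the same sign, so the right-hand side is strictly positive. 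Therefore $U(p^*,p_\epsilon)<U(p,p_\epsilon)$ for every $\epsilon\in(0,1)$, which is (\ref{equadef}) (in fact with $\epsilon_p=1$), so $p^*$ is an ESS.

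The part I expect to do the real work is the second paragraph: establishing that the equilibrium is a genuine single crossing of $\phi$ through zero — equivalently, global strict monotonicity of $U(OFF,\cdot)$, ruling out any tangency or second crossing of $F$ and $G$. Once that is in hand, everything else is bookkeeping resting on the affineness of $U$ in the deviator's action.
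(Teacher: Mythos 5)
Your proof is correct, and it follows the same overall route as the paper's: both exploit the affineness of $U(\cdot,p')$ in the deviator's own action to reduce the ESS inequality (\ref{equadef}) to the sign of $(p-p^*)\bigl(U(OFF,p_\epsilon)-C\bigr)$, both obtain the inequality with $\epsilon_p=1$, and both identify $p^*$ with the unique interior zero of $C-U(OFF,\cdot)$, i.e.\ the unique interior rest point of (\ref{repdyn}). The one place where you genuinely diverge is how the sign pattern of $\phi=U(OFF,\cdot)-C$ on either side of $p^*$ is justified: the paper asserts that uniqueness of the interior rest point forces $\dot p>0$ below $p^*$ and $\dot p<0$ above it, which implicitly relies on the endpoint signs and the single-crossing structure of $F$ and $G$ from Proposition \ref{exun} (uniqueness of a zero alone would not exclude a tangential zero with no sign change), whereas you prove outright that $U(OFF,\cdot)$ is strictly increasing by writing $1-U(OFF,p)/K$ as the probability that a vector of independent Poisson variables falls in a down-set and using a monotone coupling. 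Your monotonicity lemma is the more self-contained justification and delivers the strict single crossing directly; the paper's version is shorter but leans on the convexity/crossing argument already made in Proposition \ref{exun}. After that point the bookkeeping is identical and your conclusions coincide with the paper's.
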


\begin{proof}
Based on proposition (\ref{exun}), we first assume that the parameters of the system are such that there exists a mixed nash equiliiubrm $p^*$ and we have proved that then it is unique. More, this equilibrium is the unique exterior rest point of the replicator dynamics:
$$
\dot{p}(t)=p(t)(1-p(t))[C-U(OFF,p(t))].
$$
We have that $U(OFF,p^*)=C$. In order to prove that $p^*$ is an ESS, we choose another mixed strategy $q$ adopted by a proportion $\epsilon$ of individuals. Then, based on equation (\ref{equadef}), $p^*$ is an ESS if the exists an $\epsilon_q$ such that $\forall \epsilon \in (0,\epsilon_q)$, we have:
$$
U(p^*,\epsilon q+(1-\epsilon)p^*)<U(q,\epsilon q+(1-\epsilon)p^*).
$$ 
In fact, we prove this inequality for $\epsilon_q=1$. Let first assume w.l.o.g. that $0<q<p^*$. A similar analysis can be done for the case when $1>q>p^*$. We fixed  $\epsilon \in (0,1)$ and we denote $p_{\epsilon}=\epsilon q+(1-\epsilon)p^*$. Note that $q\leq p_{\epsilon}<p^*$ (if $\epsilon=0$ there is no mutant, so it is not an interesting case). After some simple algebraic decompositions, we have the following equivalence:
$$
U(p^*,\epsilon q+(1-\epsilon)p^*)<U(q,\epsilon q+(1-\epsilon)p^*),
$$
rewriting as:
$$
U(p^*,p_{\epsilon})<U(q,p_{\epsilon}),
$$
is equivalent to
$$
(p^*-q)(U(OFF,p_{\epsilon})-C)<0.
$$
As the mixed equilibrium $p^*$ is unique and is the unique interior rest point of the replicator dynamics, for all $p<p^*$ (resp. $p>p^*$) we have that $\dot{p}(t)>0$ (resp. $\dot{p}(t)<0$). We have that $p_{\epsilon}<p^*$ and then $\dot{p_{\epsilon}}(t)>0$ which, based on the replicator dynamics o.d.e. (\ref{repdyn}), means that for any time $t$:
$$
C-U(OFF,p_{\epsilon}(t))>0 \Longleftrightarrow C>U(OFF,p_{\epsilon}(t)).
$$
Finally, as it is try for any time $t$, it is also true for any value $p_{\epsilon}$ such that  $q\leq p_{\epsilon}<p^*$ and then:
$$
(p^*-q)(U(OFF,p_{\epsilon})-C)<0,
$$
which proves that $p^*$ is an ESS.
\end{proof}

The discrete-time version of the replicator dynamics is described as follows:
\begin{eqnarray}
p_{n+1}=p_{n}+b(n)p_n(1-p_n)[C-U(OFF,p_n)].
\label{repdynd}
\end{eqnarray}
Let timescale parameter $b(n)$ be
$$
\sum_{n}b(n)=\infty \quad\mbox{and}\quad \sum_{n}b(n)^2<\infty,
$$
then, as $n$ tends to infinity, the discrete-time iteration (\ref{repdynd}) is an approximation of the replicator dynamics ODE given by  (\ref{repdyn}).

\section{Online global control}\label{section:control}

The overarching goal of this work is to design a global control for the entire heterogeneous, stochastic interacting population. As an example of global objective function for the controller (see Fig. \ref{fig_ex1}), we consider in this section the global revenue. Moreover, we propose an online learning process as we assume that the infected cost $K$ is not known by the controller. This parameter is highly related to how individuals evaluate the cost of being infected, or by definition, it is difficult to estimate it correctly. We then propose an online learning algorithm so that a  controller can optimize a global function of the equilibrium population without knowing this parameter. We use the average revenue as a global function given by
$$
R(C)=\lambda (1-p^*(C))C,
$$  
where $p^*(C)$ is the equilibrium considering the control parameter $C$. Its goal is to maximize this function depending on the price $C$. Based on simulations, we observe that the equilibrium $p^*$ depends on the cost $C$ as an $S$-shaped function, concave and strictly convex. Particularly, for $C$ high enough, i.e., $C>\frac{K}{2}$, the equilibrium is strictly convex in $C$. Then, there exists a price $C_0$ such that for all $C>C_0$, the function $p^*(C)$ is strictly convex. Then, we have the following lemma.

\begin{lemma}\label{lemmaconv}
The revenue $R(C)$ of the provider is strictly concave for $C \in [C_0,K]$.
\end{lemma}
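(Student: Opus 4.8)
The plan is to reduce the claim to a curvature statement about the equilibrium map $C\mapsto p^*(C)$ and then read off the sign of $R''$. Write
$$
R(C)=\lambda C-\lambda C\,p^*(C)=\lambda C-\lambda\,\psi(C),\qquad \psi(C):=C\,p^*(C).
$$
Since $\lambda C$ is affine, $R$ is strictly concave on $[C_0,K]$ if and only if $\psi$ is strictly convex there, and $\psi''(C)=2\,p^{*\prime}(C)+C\,p^{*\prime\prime}(C)$. So it suffices to show that on $[C_0,K]$ both summands are nonnegative with at least one strictly positive.

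The second summand is handled by the structural fact recalled just before the statement: for $C\ge C_0$ (with $C_0\ge K/2$) the equilibrium lies on the strictly convex branch of the $S$-shaped graph of $p^*(\cdot)$, so $p^{*\prime\prime}(C)>0$ and hence $C\,p^{*\prime\prime}(C)>0$. For the first summand I need $p^*(\cdot)$ to be nondecreasing on $[C_0,K]$, and I would extract this from the equilibrium identity rather than from a purely geometric argument. On the range where the interior (mixed) equilibrium of Proposition~\ref{exun} exists, $p^*(C)$ is the unique $p$ solving $U(OFF,p)=C$, i.e.\ $K\bigl(1-e^{-\lambda p}F(p)\bigr)=C$ by~(\ref{sol}); differentiating gives $p^{*\prime}(C)=1/\bigl[\frac{d}{dp}U(OFF,p)\bigr]_{p=p^*(C)}$. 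It thus remains only to note that $p\mapsto U(OFF,p)$ is strictly increasing on $[0,1]$: $U(OFF,p)/K$ is the probability that the Poisson vector $\textbf{X}$ with intensities $\lambda r(t)p$ lands in the up-set $\{\textbf{x}:\sum_t x_t\tau_t/(1+\tau_t)\ge 1\}$, and this vector is stochastically increasing in $p$ (independent coordinates, each with increasing rate), so the probability is increasing in $p$. Hence $p^{*\prime}(C)>0$.

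Putting the two ingredients together, $\psi''(C)=2\,p^{*\prime}(C)+C\,p^{*\prime\prime}(C)>0$ throughout $[C_0,K]$, so $\psi$ is strictly convex and $R=\lambda C-\lambda\psi$ is strictly concave on $[C_0,K]$, as claimed. Note that all the remaining work after the reduction is an elementary two-line differentiation.

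The step I expect to need the most care is securing the monotonicity input $p^{*\prime}(C)\ge 0$ uniformly on the interval: one must first guarantee that the interior equilibrium persists — so that $p^*(\cdot)$ is $C^1$ — throughout $[C_0,K]$ before the implicit differentiation above is legitimate. This is precisely where the existence/uniqueness hypotheses of Proposition~\ref{exun}, together with the $S$-shape assumption, are invoked; the large-$C$ regime in which the equilibrium would degenerate to a pure strategy (cf.\ Lemma~\ref{lemma1}) must be kept outside the interval on which the lemma is stated. Once differentiability and the sign of $p^{*\prime}$ are in hand, the curvature computation closes the argument.
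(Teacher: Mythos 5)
Your proof is correct and follows essentially the same route as the paper's: differentiate $R(C)=\lambda(1-p^*(C))C$ twice (your $\psi$-reformulation is cosmetic), and conclude $R''(C)=-\lambda\bigl(2p^{*\prime}(C)+C\,p^{*\prime\prime}(C)\bigr)<0$ from $p^{*\prime}>0$ and the strict convexity of $p^*(\cdot)$ on $[C_0,K]$, the latter being exactly the simulation-based $S$-shape observation the paper also invokes without proof. The only difference is that you actually justify the monotonicity $p^{*\prime}>0$ via implicit differentiation of $U(OFF,p^*)=C$ and stochastic monotonicity of the Poisson up-set probability, whereas the paper merely asserts it, so your argument is, if anything, slightly more complete.
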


\begin{proof}
We have that $R(C)=\lambda (1-p^*(C))C$. We take the two time derivatives and then:
$$
R'(C)=-\lambda p^*(C)-\lambda C \frac{dp^*}{dC}(C),
$$
and
$$
R''(C)=-2\lambda \frac{dp^*}{dC}(C)-\lambda C \frac{d^2p^*}{dC^2}(C)<0,
$$
because the equilibrium is strictly increasing with $C$ and it is strictly convex over the interval $[C_0,C]$. Then the revenue of the provider is strictly concave over the interval $[C_0,C]$. 
\end{proof}

Based on this result, a gradient algorithm can be used to converge to then optimal price $C^*$ that maximized the revenue of the provider. In fact, we have existence and uniqueness of an optimal price. Designing a gradient algorithm in our context needs to estimate the gradient function of the revenue because there is no closed form expression of the equilibrium function. We then propose to use the following Simultaneous Perturbation Stochastic Approximation (SPSA) \cite{Spall92}:
\begin{eqnarray}
{\frac{d {R}}{dC}}(C)\sim \left(\frac{R(C+\delta \Delta)-R(C)}{\delta \Delta}\right),
\label{approx}
\end{eqnarray}
where $\Delta$ is a random variable such that $\Prob(\Delta=1)=\Prob(\Delta=-1)=\frac{1}{2}$, and $\delta$ is a small constant. One sample form of this approximation is proposed in \cite{Spall98}. However, it has been observed that this single form introduces significant bias, so that in general the two-sample form, given by equation (\ref{approx}), is considered.

Since the equilibrium of the population game is characterized by the rest-point of  the replicator dynamic given by equation (\ref{repdyn}), the controller can set a new price $(C+\delta \Delta)$ to optimize his revenue, and wait for convergence of the replicator dynamics to the equilibrium. Therefore, by observing the equilibrium, the controller has an estimation of its new revenue and also of the gradient of his revenue.


Given that for a fixed price $C$ the replicator dynamics has a global asymptotically stable equilibrium $p^*(C)$, our algorithm converges to the optimal price. Considering a starting price $C_0$, we then consider the following approximate gradient descent algorithm:
\begin{eqnarray*}
\forall n=0,1,\ldots,\quad C_{n+1}&=&C_n+a(n)\left(\frac{\tilde{R}(C_n+\delta \Delta_n)-\tilde{R}(C_n)}{\delta \Delta_n}\right),\\
&=&C_n+a(n)f(C_n,\Delta_n),
\label{stocgrad}
\end{eqnarray*}
where $a(n)$ is the update step size, a discrete random variable $\Delta_n$ at each step $n$ such that $\Prob(\Delta_n=1)=\Prob(\Delta_n=-1)=\frac{1}{2}$, and
$$
\tilde{R}(C_n)=\lambda (1-\tilde{p}(C_n))C_n.
$$
We can then define for all $n$ the following functions:
$$
h(C_n):=\E [f(C_n,\Delta_1)],
$$ 
and 
$$
M_{n+1}:=f(C_n,\Delta_{n+1})-h(C_n).
$$
Then, the approximate stochastic gradient descent can be written as:
$$
C_{n+1}=C_n+a(n)[h(C_n)+M_{n+1}],
$$
where $M_n$ is a martingale difference sequence. The update step-size parameter $a(n)$ has to be efficiently designed in order to guarantee the convergence to the global optimal solution. Indeed, the control update has to wait for the convergence of the replicator dynamics to the equilibrium. Therefore, there is a relation between the speeds of the two dynamical processes.
For a fixed $C$, the equilibrium $p^*(C)$ is a global attractor, which is also a global asymptotically stable equilibrium of the replicator dynamics. Indeed, the SPSA is coupled with the replicator dynamics.
Let $\epsilon$ be a positive constant, and the replicator equation is rewritten as\begin{eqnarray}
\dot{p}(t)&=&\frac{1}{\epsilon}p(t)(1-p(t))[C-U(OFF,p(t))].
\label{repdyneps}
\end{eqnarray}
By having $\epsilon$ small enough, the SPSA views the replicator dynamics as quasi-equilibrated while the replicator dynamics views the SPSA as quasi-static. Then, we can prove the convergence of our SPSA by viewing the underlying replicator dynamics as a two timescale dynamical system. For a sufficient relative speed of convergence of the replicator dynamics compared to the stochastic gradient descent, we can prove that our algorithm converges to the price that optimizes the global objective function. The step size of strategy update  of the population is  $\frac{1}{n}$ in discrete time. Then, we have the following proposition that yields the conditions on the step-size update of the gradient algorithm for reaching the optimal solution.

\begin{proposition}\label{condconv}
If we have the following conditions:
$$
\sum_{n}a(n)=\infty,\quad \sum_{n}a(n)^2<\infty\quad \mbox{and}\quad \frac{1}{na(n)}\rightarrow 0,
$$
then 
$
C_n \rightarrow C^*, \ \mbox{a.s.}
$
\end{proposition}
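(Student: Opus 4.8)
The plan is to view the two coupled recursions --- the discrete-time replicator update (\ref{repdynd}) with step size $b(n)=1/n$, and the approximate gradient recursion $C_{n+1}=C_n+a(n)[h(C_n)+M_{n+1}]$ with step size $a(n)$ --- as a two--time--scale stochastic approximation scheme, and to apply the ODE method for such schemes (in the spirit of Borkar and of Kushner--Yin). The hypothesis $\frac{1}{na(n)}\to 0$ is exactly $b(n)/a(n)\to 0$, which places the population strategy on the fast scale and the price on the slow scale: to the $C_n$ iteration the replicator dynamics looks quasi--equilibrated (equivalently, as in (\ref{repdyneps}), run at rate $1/\epsilon$ with $\epsilon\downarrow 0$), while to the $p_n$ iteration the price looks quasi--static.

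First I would check the standard technical preconditions. Since $\Delta_n\in\{-1,1\}$, $p_n\in[0,1]$, and $R(C)=\lambda(1-p)C$ is bounded on the compact price range of interest, the random quantity $f(C_n,\Delta_{n+1})$ is uniformly bounded; hence $M_{n+1}=f(C_n,\Delta_{n+1})-h(C_n)$ is a bounded martingale--difference sequence for the natural filtration, and $\sum_n a(n)M_{n+1}$ converges a.s.\ by the martingale convergence theorem together with $\sum_n a(n)^2<\infty$. Boundedness of the iterates is obtained by confining (or projecting) $C_n$ to $[C_0,K]$, the interval on which Lemma~\ref{lemmaconv} is valid, while $p_n\in[0,1]$ is preserved by the replicator map.

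For the fast scale, fix $C$. By the proposition identifying $p^*(C)$ as the unique interior rest point of (\ref{repdyn}) (equivalently the unique solution of $F=G$ in Proposition~\ref{exun}), the quantity $C-U(OFF,p)$ changes sign exactly once, from positive to negative, as $p$ increases through $p^*(C)$; hence $p^*(C)$ is the globally asymptotically stable equilibrium of the replicator ODE on $(0,1)$, and $C\mapsto p^*(C)$ is Lipschitz by the implicit function theorem applied to $F(p)=G(p)$. The two--time--scale theory then gives $|p_n-p^*(C_n)|\to 0$ a.s. Feeding this into the $C_n$ recursion, the slow iterate tracks the ODE $\dot C=h(C)$ with $p=p^*(C)$ frozen in; since $\Delta$ is symmetric $\pm 1$, $h(C)=\E[f(C,\Delta)]=\frac{R(C+\delta)-R(C-\delta)}{2\delta}=R'(C)+O(\delta^2)$, a central--difference (low--bias) estimate of the gradient at the equilibrium. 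By Lemma~\ref{lemmaconv}, $R$ is strictly concave on $[C_0,K]$, so this ODE has the unique maximizer $C^*$ as its unique globally asymptotically stable equilibrium, with $V(C)=R(C^*)-R(C)\ge 0$ a Lyapunov function decaying along the flow at rate $(R'(C))^2$. The ODE method for the slow iterate then yields $C_n\to C^*$ a.s.\ (up to the $O(\delta^2)$ SPSA bias, which disappears as $\delta\downarrow 0$).

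I expect the main obstacle to be making the two--time--scale coupling fully rigorous --- in particular, showing that $p_n$ tracks the \emph{moving} target $p^*(C_n)$, which requires the global asymptotic stability of the fast ODE to be uniform in $C$ over $[C_0,K]$ (uniformity near $C=K$, where $p^*(C)\to 1$ and the factor $p(1-p)$ degenerates, needs care), together with the Lipschitz dependence $C\mapsto p^*(C)$, and then combining this with the boundedness of the iterates so that the perturbed--ODE comparison for $C_n$ closes. The remaining ingredients --- the martingale estimates, the concavity of $R$, and the identification of $C^*$ --- are routine given the results already established.
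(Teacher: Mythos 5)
Your proposal takes essentially the same route as the paper's proof: a Borkar-style two-timescale stochastic approximation argument that verifies boundedness of $C_n$, global asymptotic stability of the replicator rest point $p^*(C)$ for fixed $C$ via its ESS property, a unique globally stable maximizer $C^*$ of the slow ODE from the strict concavity of $R$ in Lemma \ref{lemmaconv}, and the step-size conditions, with your version merely adding detail the paper leaves implicit (the martingale-difference bounds, the central-difference form of $h$ and its $O(\delta^2)$ bias, Lipschitzness of $p^*(\cdot)$, and the uniform-tracking caveat near $C=K$). The one point to note --- inherited from the paper itself --- is the reading of $1/(na(n))\to 0$: as literally written this makes $a(n)$ the larger step and hence the price the \emph{fast} iterate in the standard two-timescale convention, whereas the quasi-equilibration of the replicator dynamics that both you and the paper invoke requires the opposite regime $na(n)\to 0$ (which is what the paper's numerical choice $a(n)=1/(1+n\log n)$ actually satisfies), so the timescale-separation step should be stated in that corrected direction for the argument to close.
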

\begin{proof}
On one side, the approximate gradient descent algorithm follows:
\begin{eqnarray*}
\forall n=0,1,2,\ldots,\quad C_{n+1}=C_n+a(n)(\frac{\tilde{R}(C_n+\delta \Delta_n)-\tilde{R}(C_n)}{\delta \Delta_n}),
\end{eqnarray*}
where the reward depends on the equilibrium as $\tilde{R}(C_n)=\lambda (1-\tilde{p}(C_n))C_n$ and $a(n)$ is the step size of the updating rule. On the other side, the replicator dynamics given by equation $(\ref{repdyn})$ is the limit of a discrete time iteration with an update stepsize $1/n$. Then, the two discrete time iterations are coupled like in \cite{borkar96}. Then, in order to have the convergence of the coupled iteration processes, we need the following conditions:
\begin{enumerate}
\item $sup_{n}(||C_n||)<\infty$, a.s.,
\item for fixed $C$, the replicator dynamics o.d.e has a globally asymptotically stable equilibrium $p^*(C)$,
\item the o.d.e limit of the approximate gradient algorithm has a globally asymptotically stable equilibrium when replacing the equilibrium by $p^*(C)$,
\item $$
\sum_{n}a(n)=\infty,\quad \sum_{n}a(n)^2<\infty\quad \mbox{and}\quad \frac{1}{na(n)}\rightarrow 0,
$$
\end{enumerate} 
The first point is verified as if $C>K$ then $p^*=1$, thus we have that for all $n$, $||C_n||<K$ and then $||R(C_n)||<\lambda K$. Thus we have:
\begin{eqnarray*}
||C_{n+1}||&=&||C_n+\frac{1}{n}(\frac{\tilde{R}(C_n+\delta \Delta_n)-\tilde{R}(C_n)}{\delta \Delta_n})||,\\
&<&||C_n||+||\frac{1}{n}(\frac{\tilde{R}(C_n+\delta \Delta_n)-\tilde{R}(C_n)}{\delta \Delta_n})||,\\
&<&||C_n||+\frac{||\tilde{R}(C_n+\delta \Delta_n)||+||\tilde{R}(C_n)||}{\delta},\\
&<&K(1+\frac{2\lambda}{\delta})<\infty
\end{eqnarray*}
and then $sup_{n}(||C_n||)<\infty$.
The second point is verified as the replicator dynamics rest point, for a fixed $C$, is an interior ESS in our case and then a global globally asymptotically stable equilibrium (see \cite{Hofbauer03}-Theorem 7). Based on the strict concavity of the revenue function proved in lemma \ref{lemmaconv}, this function has a unique maximizer, and then the o.d.e limit of the stochastic gradient has a global globally asymptotically stable equilibrium. This proves the third point. Finally, we assume that   
$$
\sum_{n}a(n)=\infty,\quad \sum_{n}a(n)^2<\infty\quad \mbox{and}\quad \frac{1}{na(n)}\rightarrow 0,
$$
which implies that the gradient update moves on the slower timescale  than the replicator dynamics. Thus we have the convergence almost surely of our approximate gradient descent algorithm given by equation (\ref{stocgrad}) to the optimal price, i.e. $C_n \rightarrow C^*\quad \mbox{a.s.}$.
\end{proof}
The last proposition provides conditions on the time scale of the gradient algorithm to ensure that the coupled algorithm converges to the optimal solution of the global control problem. 

\section{Complete characterization of the virus protection global control problem}\label{section:comp}

In this section, we first provide a complete characterization of the solutions to the global control problem for the population of one single type. Second, we  show the impact of heterogeneity and randomness of our system on the equilibrium results and the global control problem.

\subsection{Population with Single Type}
Consider a fully connected network of size $N$, and all the players are of the same type.  Hence $\tau_i=\tau$ and the value of the epidemic threshold $\tau_c$ is exactly equal to inverse of the largest eigenvalue, i.e. the spectral radius, $\lambda_1$ of the adjacency matrix \cite{VM09}. In particular, we arrive at
$$
\tau_c(x)=\frac{1}{x-1},
$$
where $x>1$ is the number of nodes that do not invest. The special cases where $x=1$  and $x=0$ are not representative of our problem since there is no propagation effects in these two cases. The utility function for a node that does not invest can be simplified into
$$
u_t(OFF,x)=
 \left\{ \begin{array}{l l}
K& \mbox{if} \quad x \geq 1+\frac{1}{\tau}, \\
0 & \mbox{otherwise.}
\end{array}\right.
$$

The value $x$ is not known by every node, but it follows a Poisson distribution with rate $\lambda$. We denote by $\tilde{p}$ the solution to the following equation
$$
\sum_{k=0}^{\lfloor \frac{\delta}{\beta} \rfloor}\frac{(\lambda \tilde{p})^k}{k!}=\left(1-\frac{C}{K}\right)e^{\lambda \tilde{p}}.
$$
The strategy $OFF$ is dominant if $K\leq C$ as stated in Lemma (\ref{lemma1}). Next, we investigate the case where $K>C$. Note that if the effective spreading rate is sufficiently large, i.e., $\tau=\frac{\beta}{\delta}>1$, then, we can find the equilibrium  given by:
$$
p^*=\frac{1}{\lambda}\log \left(\frac{K}{K-C}\right).
$$ 
This solution is not $0$ since there is a very small probability that an individual is not infected. However, this value becomes close to $0$ as $\lambda$ increases.
\begin{proposition}
If the effective spreading rate is high enough but not too much, i.e. $1/2 < \tau \leq 1$, then we have the unique equilibrium given by:
$$
p^*=-\frac{1}{\lambda}\left[1+W\left(-e^{-1}\left(1-\frac{C}{K}\right)\right)\right],
$$
with $W(x)$ is known as the \textit{Lambert W function} and defined such that $x=W(x)e^{W(x)}$.
\end{proposition}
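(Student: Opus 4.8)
The plan is to show that in this parameter range the implicit equation defining the equilibrium collapses to a transcendental equation of the form $we^{w}=c$, which is exactly what the Lambert $W$ function inverts, and then to identify the correct real branch.

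\textbf{Reducing the equilibrium equation.} Since $1/2 < \tau \le 1$, we have $1 \le 1/\tau < 2$, hence $\lfloor \delta/\beta \rfloor = \lfloor 1/\tau \rfloor = 1$. Substituting this into the single-type equilibrium equation displayed immediately before the proposition, the left-hand sum retains only its $k=0$ and $k=1$ terms, so $p^{*}=\tilde p$ is the root in $(0,1)$ of
$$
1 + \lambda p = \left(1 - \frac{C}{K}\right)e^{\lambda p}.
$$
Existence of exactly one such root, together with the fact that the degenerate case $p^{*}=1$ is excluded, is already guaranteed by Proposition~\ref{exun} (and the ESS/rest-point characterization that follows it), so all that remains is to put this equation in closed form.

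\textbf{The Lambert substitution.} I would set $w := -(1+\lambda p)$, i.e. $\lambda p = -1-w$. The equation above becomes $-w = (1-\frac{C}{K})e^{-1}e^{-w}$; multiplying through by $e^{w}$ gives $w e^{w} = -e^{-1}(1-\frac{C}{K})$. By the defining identity $z=W(z)e^{W(z)}$ this yields $w = W\!\left(-e^{-1}(1-\frac{C}{K})\right)$, and therefore
$$
p^{*} \;=\; -\frac{1}{\lambda}\left[\,1 + W\!\left(-e^{-1}\!\left(1-\frac{C}{K}\right)\right)\right],
$$
which is the stated formula.

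\textbf{Branch selection --- the one point that needs care.} Because $0<C<K$, the argument $z = -e^{-1}(1-\frac{C}{K})$ lies in $(-e^{-1},0)$, a range on which $W$ has two real branches. The equilibrium must satisfy $p^{*}>0$, which forces $w = -(1+\lambda p^{*}) < -1$; this is precisely the range of the lower branch $W_{-1}$, whereas the principal branch $W_{0}$ takes values in $(-1,0)$ and would make $p^{*}<0$, impossible for a probability. Hence the formula is to be read with $W=W_{-1}$, and the resulting value coincides with the unique interior equilibrium of Proposition~\ref{exun}. As consistency checks, $W_{-1}(z)\to -1$ as $z\to -e^{-1}$ (i.e. $C\to 0$), giving $p^{*}\to 0$ --- costless protection is adopted by everyone --- while $W_{-1}(z)\to -\infty$ as $z\to 0^{-}$ (i.e. $C\to K$), so $p^{*}$ rises toward $1$ and the mixed equilibrium disappears exactly where the hypothesis of Proposition~\ref{exun} degenerates. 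The main obstacle is thus not the algebra but the two bookkeeping steps: confirming that $\lfloor \delta/\beta\rfloor = 1$ throughout $1/2<\tau\le 1$ so that the defining sum really truncates after the linear term, and then pinning down the $W_{-1}$ branch from the feasibility requirement $p^{*}\in(0,1)$.
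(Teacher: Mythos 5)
Your proposal is correct and follows essentially the same route as the paper: both observe that $1/2<\tau\le 1$ forces $\lfloor \delta/\beta\rfloor=1$, truncate the defining sum to obtain $1+\lambda p=\left(1-\frac{C}{K}\right)e^{\lambda p}$, and invert this with the Lambert $W$ function to get $p^*=-\frac{1}{\lambda}\left[1+W\left(-e^{-1}\left(1-\frac{C}{K}\right)\right)\right]$. The only stylistic difference is that the paper plugs the equation into the template solution of the transcendental equation $e^{-cp}=a_0(p-r)$ with $c=-\lambda$, $a_0=\frac{\lambda}{1-C/K}$, $r=-\frac{1}{\lambda}$, whereas you perform the substitution $w=-(1+\lambda p)$ directly; the algebra is identical. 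Your branch-selection step is a genuine addition: the paper writes ``LambertW'' without specifying a branch, and since the argument lies in $(-e^{-1},0)$ where $W$ has two real branches, your observation that feasibility $p^*\in(0,1)$ forces $w<-1$, hence the lower branch $W_{-1}$ (the principal branch would give $p^*<0$), makes the formula unambiguous and is a worthwhile refinement. One minor inaccuracy in your consistency check: as $C\to K$ the formula diverges ($W_{-1}(z)\to-\infty$ as $z\to 0^-$), so $p^*$ does not merely ``rise toward $1$''; rather, the formula exits $(0,1)$ and the equilibrium saturates at $p^*=1$ once the existence condition of Proposition~\ref{exun} fails --- but this remark is outside the proof proper and does not affect its validity.
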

\begin{proof}
Assuming that $1/2 < \tau \leq 1$ implies that $\lfloor \frac{\delta}{\beta} \rfloor =1$ and then the mixed equilibrium is solution bet ween the interval $[0,1]$ of the following equation:
$$
1+\lambda p=(1-\frac{C}{K})e^{\lambda p}.
$$
This equation is equivalent to:
$$
e^{\lambda p}=\frac{\lambda}{1-\frac{C}{K}}(p+\frac{1}{\lambda}).
$$
This equation is the following \textit{transcendental algebraic} equation:
$$
e^{-cp}=a_0(p-r), 
$$
with the following constants:
$$
c=-\lambda, \quad a_0=\frac{\lambda}{1-\frac{C}{K}},\quad \mbox{and}\quad r=-\frac{1}{\lambda}.
$$
Then the solution is given by:
$$
p^*=r+\frac{1}{c}LambertW(\frac{ce^{-cr}}{a_0}),
$$
which gives:
\begin{eqnarray*}
p^*&=&-\frac{1}{\lambda}-\frac{1}{\lambda}LambertW(\frac{-\lambda e^{-1}(1-\frac{C}{K})}{\lambda}),\\
&=&-\frac{1}{\lambda}[1+LambertW(-e^{-1}(1-\frac{C}{K}))].
\end{eqnarray*}
\end{proof}
In the case where $\tau\leq 1/2$, we have the following description of the equilibrium $p^*$:
\begin{equation}
p^*=\left\{\begin{array}{ll} 
\tilde{p} & \mbox{ if } (1-\frac{C}{K})e^{\lambda } \leq  \sum_{k=0}^{\lfloor \frac{\delta}{\beta} \rfloor}\frac{(\lambda)^k}{k!},  \\\nonumber
1 &  \mbox{ otherwise. } 
\end{array}\right.  \nonumber
\label{condmix}
\end{equation}

The solution $\tilde{p}$ is in fact the solution of the so-called \textit{Generalized Lambert W function} \cite{Scott06}:
$$
e^{-cx}=\frac{P_N(x)}{Q_{M}(x)},
$$
where $c>0$ is a constant and $P_N(x)$ and $Q_M(x)$ are polynomials in $x$ of respectively orders $N$ and $M$. Though this equation cannot be solved in its general form (approximations are possible for simple cases \cite{Scott13}), it embodies an interesting link between gravity theory and quantum mechanics \cite{Farrugia07}. 

\subsection{Population with Multiple Types}
We consider  2 types of individuals in a population, i.e., $T=2$. We then have $r(1):=r$ (i.e., the proportion of type-1 individuals in the population) and $r(2):=1-r$ (i.e., the proportion of type-2 individuals). Individuals from each type differ in their capacity (e.g. recovery rate $\tau$) to recover from the virus. We let $\tau_1<\tau_2$, which means that $\delta_1>\delta_2$, i.e., type-1 individuals are more resilient to the virus and it takes generally less time for them to react and then to recover. 

\begin{figure}[htbp]
\centering
\vspace{-3mm}\includegraphics[width=5in]{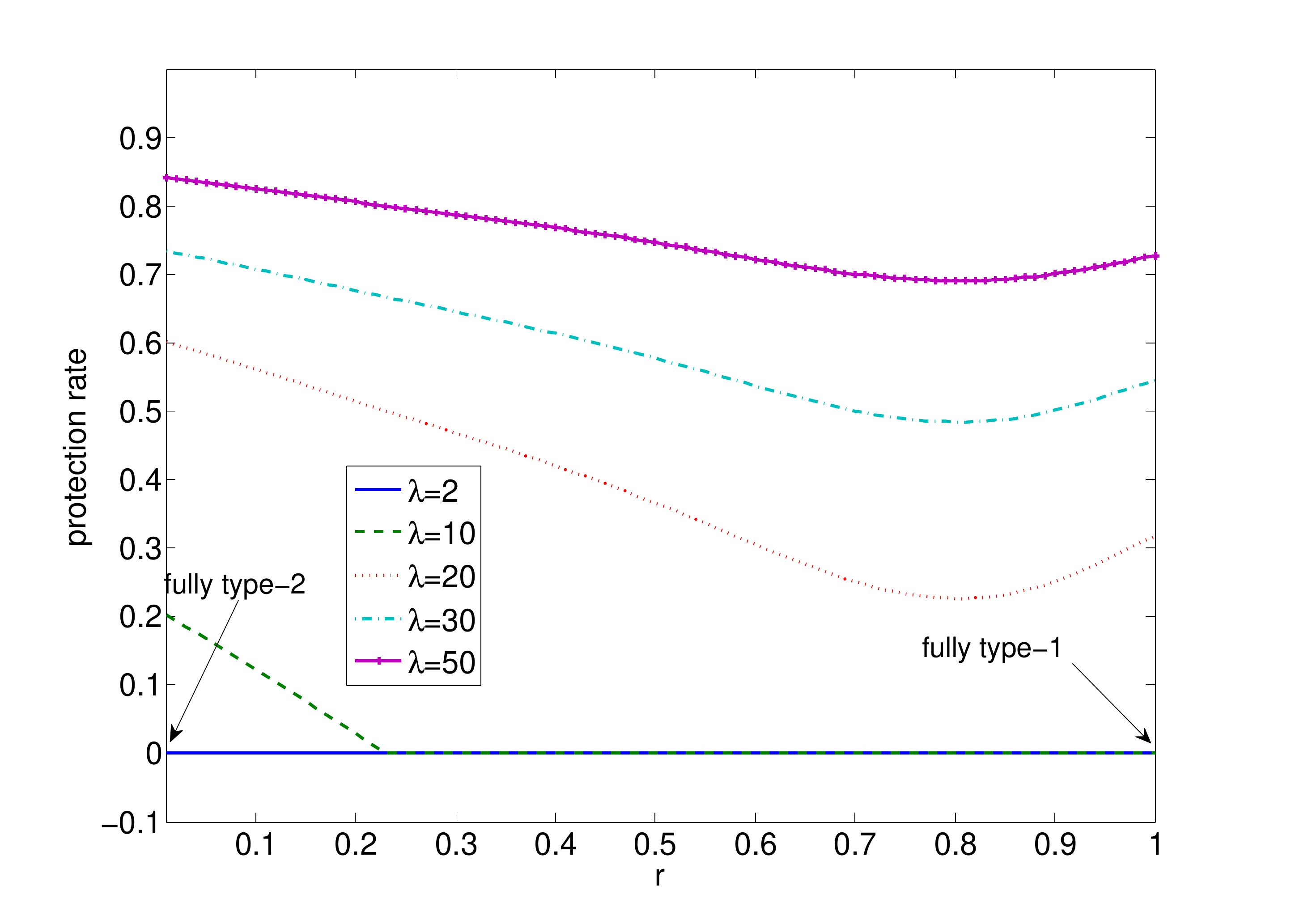}
\caption{Proportion of individuals protected inside the population at equilibrium depending on the average number of nodes $\lambda$ in the graph and two types of individuals with varying the heterogeneity of the population. The parameters are: $\tau_1=0.05$, $\tau_2=0.2$, $C=4$ and $K=5$.}
\label{fig_ex2}
\end{figure}

\begin{figure}[htbp]
\centering
\vspace{-3mm}\includegraphics[width=5in]{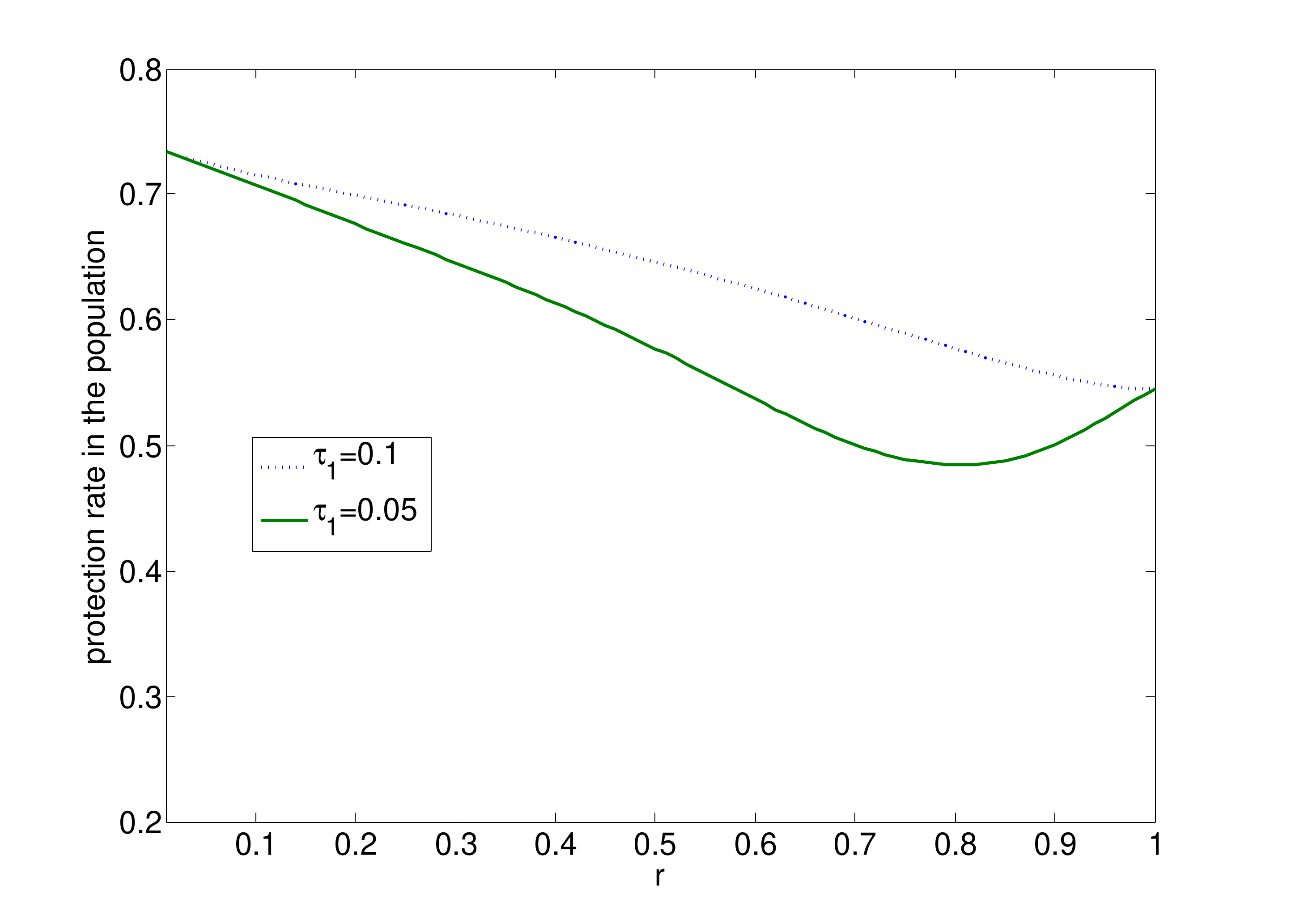}
\caption{Proportion of individuals protected inside the population at equilibrium depending on the proportion $r$ of the two types of individuals varying the heterogeneity of the population. The parameters are: $\lambda=30$, $\tau_2=0.2$, $C=4$ and $K=5$.}
\label{fig_ex3}
\end{figure}


\begin{figure}[htbp]
\centering
\vspace{-3mm}\includegraphics[width=5in]{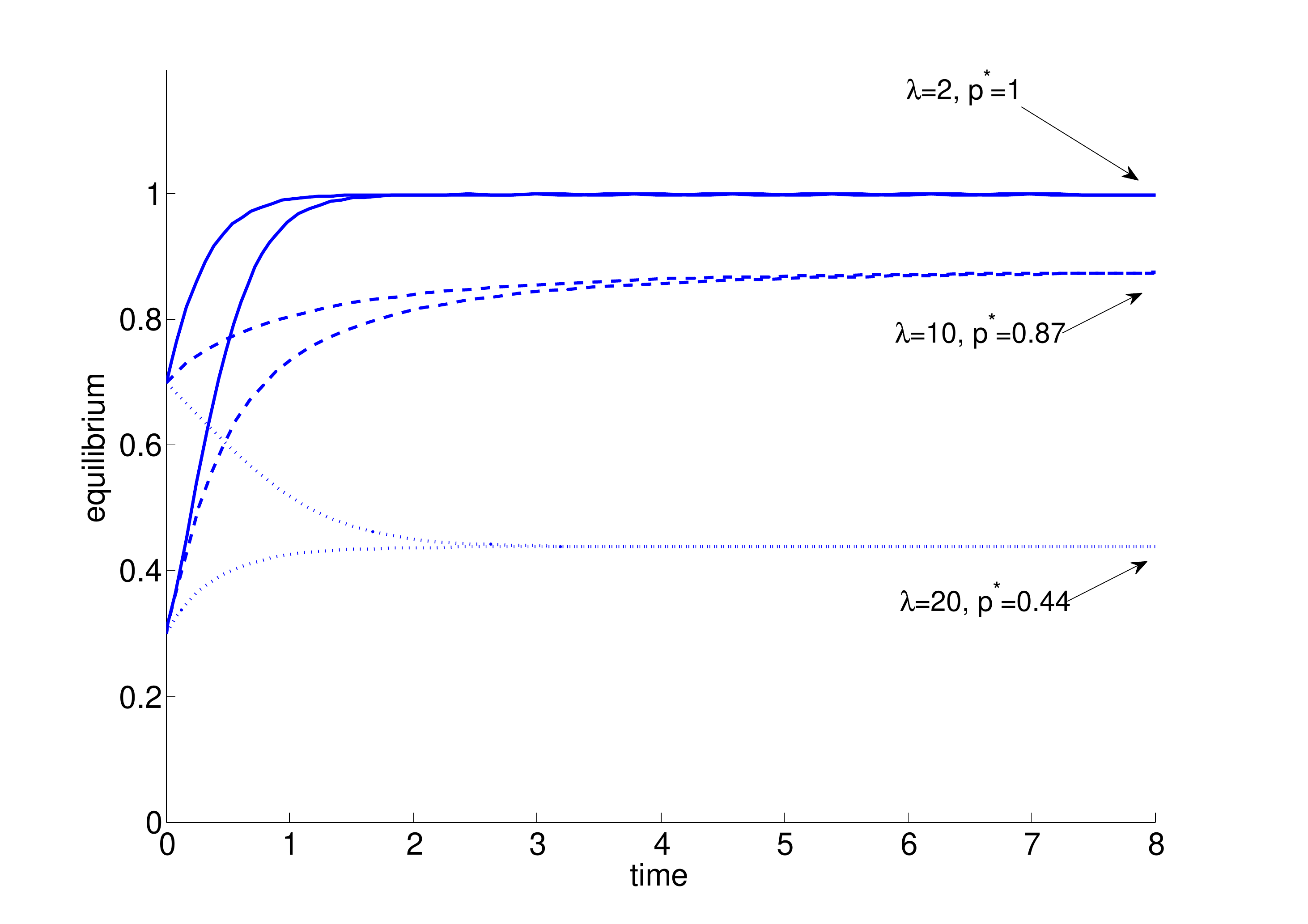}
\caption{Convergence of the replicator dynamics to the Nash equilibrium with 2 types of individuals and the following parameters: $\tau_1=0.05$, $\tau_2=0.2$, $C=4$, $r=0.1$ and $K=5$.. We consider two initial points: $p(0)=0.3$ and $p(0)=0.7$.}
\label{fig_ex4}
\end{figure}

\begin{figure}[htbp]
\centering
\vspace{-3mm}\includegraphics[width=5in]{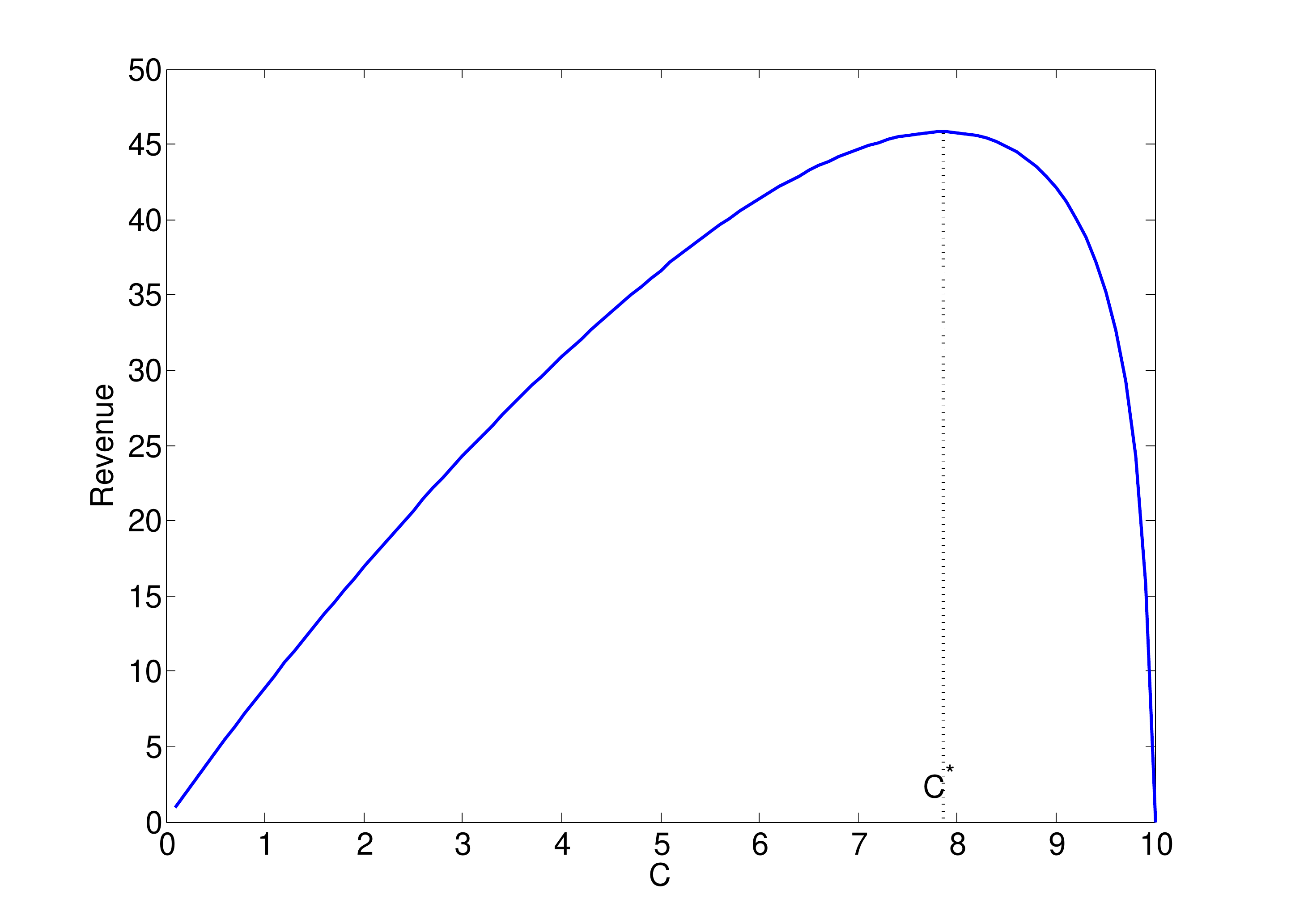}
\caption{Revenue of the provider depending on the price $C$ with the following parameters $\tau_1=0.5$, $\tau_2=0.98$, $K=10$, $r(1)=0.3$ and $\lambda=10$.}
\label{fig_ex5}
\end{figure}


\subsubsection{Equilibrium Paradox}
When $r$ is close to $0$ (resp. $1$), i.e., only type-2 (resp. type-1) individuals form the population, we observe in Fig. \ref{fig_ex2} the impact of both heterogeneity parameters $\lambda$ and the type distribution $r(\cdot)$ on the equilibrium. Specifically, for the reason of convenience, we show the percentage of people protecting themselves (i.e., the protection rate) which corresponds to $1-p^*$. We can observe that 
the average number of interacting individuals, which is equal to $\lambda$, has a positive  impact on the protection rate inside the population. For the same heterogeneity type given by a distribution $r(\cdot)$, the protection rate is strictly increasing with $\lambda$. It is obvious that, when each individual interacts with more individuals in average, it makes individuals more vulnerable to the contagion, and then requires a higher level of protection. 
Comparing $\lambda=2$ (e.g. a population with mostly pairwise interactions as in standard evolutionary game framework) and $\lambda=10$, we can observe that the protection rate is strictly higher only when the proportion of type-2 individuals is higher than around 80\%. This means that under this threshold type proportion, individuals do not feel in a risky environment even if the number of interacting individuals  is large (i.e.,  $\lambda=10$). When $\lambda$ becomes even larger another interesting property arises. From  Fig.  \ref{fig_ex2}, we can observe that, for  parameter $\lambda=20$, the impact of the heterogeneity type is significant. By increasing the heterogeneity from $r=0$ (only type-2 individuals), the protection rate decreases. In fact, increasing the heterogeneity in our scenario means that we reduce the proportion of type-2 individuals meanwhile increasing the proportion of type-1 individuals. 
 
 As we further increase the value $r$,  the phenomenon of \textit{heterogeneity induced confidence principle} arises. A highly heterogeneous population 
 leads to a decreasing the protection rate  as the heterogeneity reaches the percentage value around 20 \% of type-2. Moreover, this threshold seems be independent of the average number of individuals in each local interaction.



We can explain this counter-intuitive outcome by considering the particular case with $\lambda=30$ and with two different effective spreading rate for type-1 individuals. This scenario is depicted on Fig. \ref{fig_ex3}. Here, we can observe that the behavior of the protection rate is as expected, always decreasing when increasing the proportion of type-2 individuals. This is obtained when the effective spreading rate of type-1 individuals is equals to $0.1$. In the other case, when $\tau_1=0.05$ even if more type-2 individuals are in the population, the global protection rate is decreasing as more heterogeneity is introduced. Individuals behave more confidently and protect less themselves. 


\subsubsection{Convergence of the Replicator Dynamics}

In Fig. \ref{fig_ex4}, the replicator dynamics equation is illustrated for two different starting points and also for different values of the average interaction size $\lambda$. This result confirms the convergence of the ODE  to the equilibrium and also that the equilibrium is decreasing with the parameter $\lambda$. In fact, the protection rate is by definition the proportion of individuals that are protected, i.e., $1-p^*$. Then, we can observe on Fig. \ref{fig_ex2} that for $r=0.1$ and $\lambda=10$, the protection rate of the population is $0.13$, which corresponds to the rest point of the replicator dynamics in long dashed line. This result is corroborated for the case where $\lambda=20$ and the protection rate is then $0.56$, we obtain that the replicator dynamics converge to $0.44$. 


\subsubsection{Global optimization and learning algorithm}

Finally, we illustrate the global  control design problem. We first describe in Fig. \ref{fig_ex5}  the global revenue for the controller as a function of his control parameter, and observe the strict concavity property.



We show in Fig. \ref{fig_ex6} the result of our learning mechanism for several control updates with step size $a(n)=\frac{1}{1+n \log (n)}$, $a(n)=\frac{1}{n}$ and $a(n)=\frac{1}{n^2}$, respectively. The first control update satisfies the conditions  in Proposition \ref{condconv}. We can observe that for this control update, our learning algorithm converges to the optimal control, whereas the two other control updates do not. The last control update $a(n)=\frac{1}{n^2}$ gives a too fast update of the control and finally converges to the control value which is far from the optimal one, and therefore, the revenue at this value is far from the optimal revenue.

\begin{figure}[h]
\centering
\vspace{-3mm}\includegraphics[width=5in]{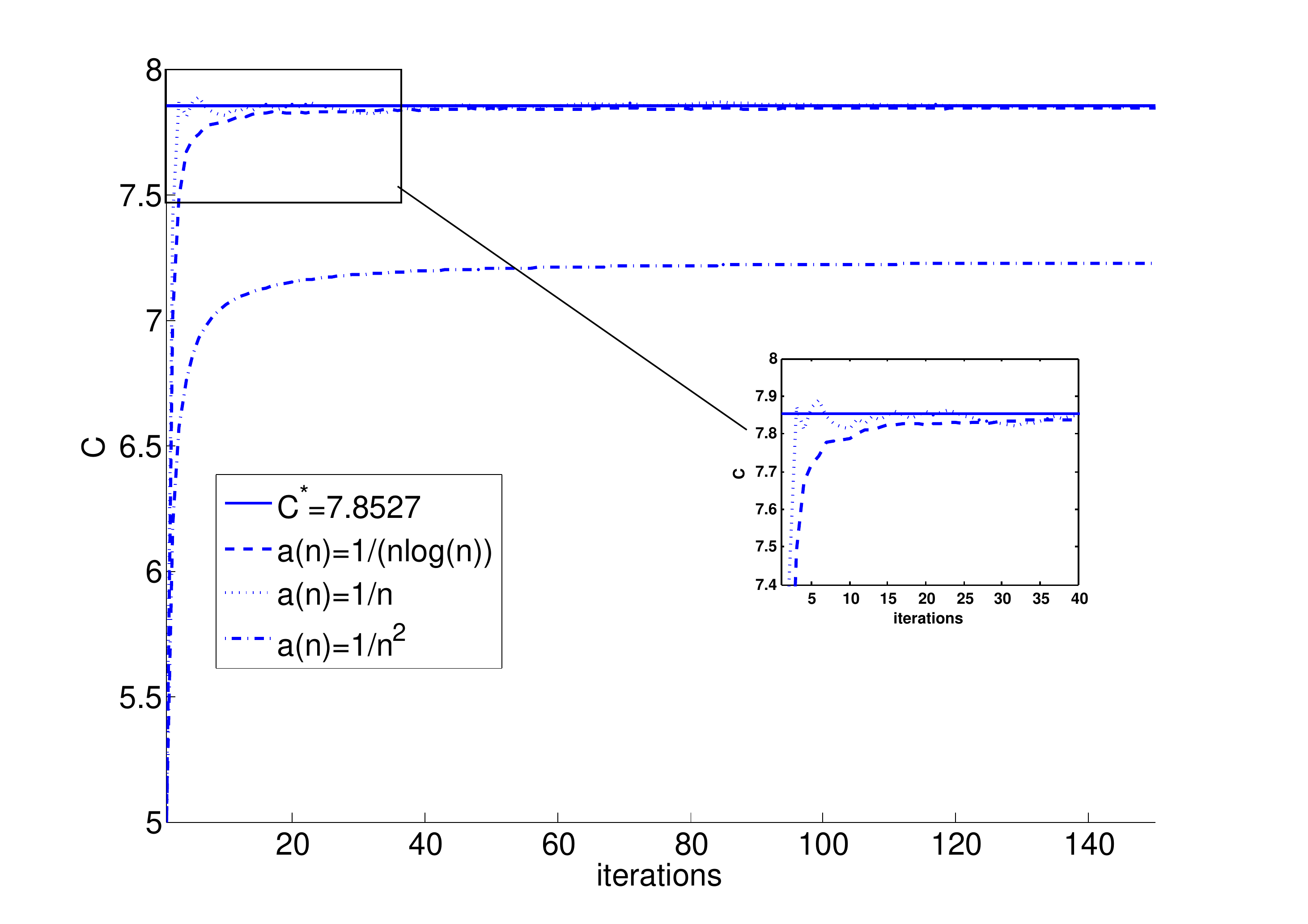}\vspace{-5mm}
\caption{Convergence of our learning algorithm with the SPSA to the optimal price with  $\delta_1=10$, $\delta_2=5.1$, $\beta=5$, $K=10$, $r(1)=0.3$ and $\lambda=10$.}
\label{fig_ex6}
\end{figure}


\section{Conclusion}\label{section:conc}
In this paper, we have first described a framework of  large population game  with heterogeneous types of individuals, in which  local interactions involve a random number of individuals of different types. We have developed the concept of evolutionary stability equilibrium as a solution that characterizes the game behavior. A decentralized virus protection problem has been used to motivate and illustrate this framework. In order to achieve desirable outcome of the game, we have developed methodologies to design a global controller for this dynamic heterogeneous population game. In particular, the interdependency between the global control and the population behaviors has been analyzed using  coupled dynamics between approximate stochastic gradient algorithms with the replicator dynamics. We have shown  the convergence of our learning algorithm, and provided  numerical illustrations to demonstrate the impact of the heterogeneity on the outcome of the system. As a future work, we would lift the assumption of Poisson distribution on the population, and investigate the data-driven reinforcement learning type of algorithms for as a global controller.

\bibliographystyle{ieeetr}

\bibliography{poissongame}

\end{document}